\newcommand{\tc}{\mathrm{tc}}
\newcommand{\TC}{\mathrm{TC}}
\newcommand{\C}[2]{
\ifthenelse{#1=0 \and #2=0}{\textsf{\upshape C}}
{\ifthenelse{#2=0}{\textsf{\upshape C}^{#1}}
{\textsf{\upshape C}^{#1,#2}}}
}
\let\oldr@@t\r@@t
\def\r@@t#1#2{%
\setbox0=\hbox{$\oldr@@t#1{#2\,}$}\dimen0=\ht0
\advance\dimen0-0.2\ht0
\setbox2=\hbox{\vrule height\ht0 depth -\dimen0}%
{\box0\lower0.4pt\box2}}
\LetLtxMacro{\oldsqrt}{\sqrt}
\renewcommand*{\sqrt}[2][\ ]{\oldsqrt[#1]{#2}}
\theoremstyle{plain}
\newtheorem{theorem}{Theorem}
\newtheorem{proposition}[theorem]{Proposition}
\theoremstyle{definition}
\theoremstyle{remark}
\newtheorem{remark}[theorem]{Remark}
\numberwithin{equation}{section}
\numberwithin{figure}{section}
\numberwithin{table}{section}
\begin{document}
\title[The Impact of Proportional Transaction Costs]{The Impact of Proportional Transaction Costs on Systematically Generated Portfolios}

\author{Johannes Ruf \and Kangjianan Xie}

\address{Johannes Ruf\\
Department of Mathematics\\
London School of Economics and Political Science}

\email{j.ruf@lse.ac.uk}

\address{Kangjianan Xie\\
Department of Mathematics\\
University College London}

\email{kangjianan.xie.14@ucl.ac.uk}

\thanks{We thank Camilo Garc{\'\i}a, Johannes Muhle-Karbe, Soumik Pal, Vassilios Papathanakos, and Leonard Wong for many helpful discussions on the subject matter of this paper.}  


\keywords{Diversity-weighted portfolio; equally-weighted portfolio; functionally generated portfolio; portfolio analysis; Stochastic Portfolio Theory; transaction cost}

\date{\today}

\begin{abstract} 
The effect of proportional transaction costs on systematically generated portfolios is studied empirically. The performance of several portfolios (the index tracking portfolio, the equally-weighted portfolio, the entropy-weighted portfolio, and the diversity-weighted portfolio) in the presence of dividends and transaction costs is examined under different configurations involving the trading frequency, constituent list size, and renewing frequency. Moreover, a method to smooth transaction costs is proposed.
\end{abstract}

\maketitle

\section{Introduction}

Although often neglected in portfolio analysis for sake of simplicity, transaction costs matter significantly for portfolio performance. Even small proportional transaction costs can have a large negative effect, especially when trades are made to rebalance the portfolio in a relatively high frequency. Hence, one should at least test the performance of a given portfolio when transaction costs are imposed, even if transaction costs are not explicitly taken into account while constructing the portfolio.

In this paper, we examine the effects of imposing transaction costs on systematically generated portfolios, for example, functionally generated portfolios. Such portfolios play a significant role in Stochastic Portfolio Theory; see \cite{MR1894767}. \cite{ruf2018generalised} and \cite{karatzas2018trading} demonstrate empirically that functionally generated portfolios outperform the market portfolio in the absence of transaction costs. To explore whether or to what extent this result still holds when transaction costs are imposed, we empirically examine the performance of portfolios under different configurations including trading frequency, transaction cost rate, constituent list size, and renewing frequency. For the diversity-weighted portfolio, we also propose a method to smooth transaction costs. \cite{wong2019information} indicates an alternative approach, namely to adjust the trading frequency based on certain information-theoretic quantities.

\cite{magill1976portfolio} are among the first to study the impact of proportional transaction costs in portfolio choice. We refer to \cite{MR3183924} and \cite{muhle2017primer} for an overview of the transaction cost literature that evolved afterwards. Most of this literature focuses on the case of one risky asset only. For a discussion of transaction costs in the presence of several risky assets, we refer to \cite{MR2380942}, \cite{MR3032935}, and \cite{MR3418824}. An empirical analysis of the effects of transaction costs is provided in \cite{stoll1983transaction}, \cite{bajgrowicz2012technical}, and \cite{olivares2018robust}. We follow up on this research by providing a systematic analysis of the impact of transaction costs on functionally generated portfolios.

The following is an outline of this paper. Section~\ref{sec data} proposes a framework of backtesting portfolio performance in the presence of transaction costs. In particular, Subsection~\ref{subsec 2.1} incorporates proportional transaction costs when rebalancing a portfolio and Subsection~\ref{subsec 2.2} provides some practical considerations and details when backtesting portfolio performance. Section~\ref{sec E} empirically examines the performance of several different portfolios under various configurations. A method to smooth transaction costs is also provided in Section~\ref{sec E}. Section~\ref{sec C} concludes.

\section{Backtesting in the presence of transaction costs}\label{sec data}

\subsection{Incorporating transaction costs into wealth dynamics}\label{subsec 2.1}

We shall study the performance of long-only stock portfolios that are rebalanced discretely. The market is not assumed to be frictionless; transaction costs are imposed when we trade in the market to rebalance the portfolios. The portfolios are constructed in such a way that their weights match given target weights after paying transaction costs. This construction is more rigid than the one in \cite{garleanu2013dynamic}, for example, where the portfolio weights may deviate from the target weights.

To be more specific, consider a market with $d\geq2$ stocks. Denote the amount of currency invested in each stock by $\psi(\cdot)=(\psi_{1}(\cdot),\cdots,\psi_{d}(\cdot))'$ and the total amount invested in a portfolio by
\[
V(\cdot)=\sum_{i=1}^{d}\psi_{i}(\cdot)\geq0.
\]
Furthermore, denote the portfolio weights by $\pi(\cdot)=(\pi_{1}(\cdot),\cdots,\pi_{d}(\cdot))'$. Note that $\psi_{i}(\cdot)=\pi_{i}(\cdot)V(\cdot)$, for all $i\in\{1,\cdots,d\}$.

Assume that trading stocks involves proportional transaction costs at a time-invariant rate $\tc^{\rm{b}}$ ($\tc^{\rm{s}}$), with $0\leq\tc^{\rm{b}},\tc^{\rm{s}}<1$ for buying (selling) a stock. This means that the sale of one unit of currency of a stock nets only $\left(1-\tc^{\rm{s}}\right)$ units of currency in cash, while buying one unit of currency of a stock costs $\left(1+\tc^{\rm{b}}\right)$ units of currency.

Let us now consider how to trade the stocks in order to match the target weights when transaction costs are imposed. To begin, let us focus on trading at a specific time $t$. When rebalancing the portfolio at time $t$, we know the wealth $\psi(t-)$ invested in each stock and hence the total wealth of the portfolio $V(t-)=\sum_{i=1}^{d}\psi_{i}(t-)$ (exclusive of dividends). We also know the dividends paid at time $t-$, their total denoted by $D(t-)\geq0$.

Given target weights $\pi$, we require $\pi(t)=\pi$ after the portfolio is rebalanced at time $t$. After trading, the wealth $\psi(t)$ invested in each stock in the portfolio satisfies
\begin{equation}\label{eq psi}
\psi_{j}(t)=\pi_{j}(t)\sum_{i=1}^{d}\psi_{i}(t),\quad j\in\{1,\cdots,d\}.
\end{equation}
We provide details about how to compute $\psi(t)$ later in this subsection.

As the portfolio needs to be self-financing, the amount of currency used to buy extra stocks should be exactly the amount of currency obtained from selling redundant stocks plus the dividends if there are any. This yields
\begin{equation}\label{eq Phi1}
\left(1+\tc^{\rm{b}}\right)\sum_{i=1}^{d}\left(\psi_{i}(t)-\psi_{i}(t-)\right)^{+}=\left(1-\tc^{\rm{s}}\right)\sum_{i=1}^{d}\left(\psi_{i}(t-)-\psi_{i}(t)\right)^{+}+D(t-).
\end{equation}
The total transaction costs imposed from trading stocks at time $t$ are computed by
\begin{equation}\label{eq TC}
\TC(t)=\tc^{\rm{b}}\sum_{i=1}^{d}\left(\psi_{i}(t)-\psi_{i}(t-)\right)^{+}+\tc^{\rm{s}}\sum_{i=1}^{d}\left(\psi_{i}(t-)-\psi_{i}(t)\right)^{+}.
\end{equation}
Therefore, the total wealth of the portfolio at time $t$, given by $V(t)=\sum_{i=1}^{d}\psi_{i}(t)$, satisfies
\[
V(t)=V(t-)+D(t-)-\TC(t).
\]

\subsubsection*{Method of computing $\psi(t)$}

In the following, we propose a method to compute $\psi(t)$, given $\psi(t-)$, $D(t-)$, and the target weights $\pi$. Throughout this section, we assume
\[
V(t-)>0,\quad D(t-)\geq0,\quad\sum_{i=1}^{d}\pi_{i}=1,\quad\pi_{j}\geq0,\quad\text{and}\quad\psi_{j}(t-)\geq0,
\]
for all $j\in\{1,\cdots,d\}$.

To begin with, \eqref{eq psi} implies that $\psi(t)$ is of the form
\begin{equation}\label{eq Phi}
\psi_{j}(t)=cV(t-)\pi_{j}(t),\quad j\in\{1,\cdots,d\},
\end{equation}
for some $c>0$. Note that if the market is frictionless, i.e., if $\tc^{\rm{b}}=\tc^{\rm{s}}=0$, and if there are no dividends paid at time $t-$, i.e., if $D(t-)=0$, then $V(t)=V(t-)$ and $c=1$. When transaction costs are imposed, we shall use the constraint \eqref{eq Phi1} to determine $c$.

To make headway, define
\begin{equation}\label{eq Dt}
\widehat{D}=\frac{D(t-)+\left(1-\tc^{\rm{s}}\right)\sum_{i=1}^{d}\psi_{i}(t-)\mathbf{1}_{\pi_{i}(t)=0}}{V(t-)}
\end{equation}
and
\[
c_{j}=\frac{\pi_{j}(t-)}{\pi_{j}(t)}\mathbf{1}_{\pi_{j}(t)>0},\quad j\in\{1,\cdots,d\}.
\]
Then dividing both sides of \eqref{eq Phi1} by $V(t-)$ yields
\begin{equation}\label{eq Psi1}
\left(1+\tc^{\rm{b}}\right)\sum_{i=1}^{d}\left(c-c_{i}\right)^{+}\pi_{i}(t)=\left(1-\tc^{\rm{s}}\right)\sum_{i=1}^{d}\left(c_{i}-c\right)^{+}\pi_{i}(t)+\widehat{D}.
\end{equation}

Note that the LHS of \eqref{eq Psi1} is a continuous function of $c$ and strictly increasing from 0 to $\infty$, as $c$ changes from $\min_{i\in\{1,\cdots,d\}}c_{i}$ to $\infty$. Moreover, the RHS of \eqref{eq Psi1} is a continuous function of $c$ strictly decreasing from $\infty$ to $\widehat{D}\geq0$, as $c$ changes from $-\infty$ to $\max_{i\in\{1,\cdots,d\}}c_{i}$, and equals $\widehat{D}$ afterwards, as $c$ changes from $\max_{i\in\{1,\cdots,d\}}c_{i}$ to $\infty$. Hence, both sides of \eqref{eq Psi1} as functions of $c$ must intersect at some unique point, i.e., a unique solution exists for \eqref{eq Psi1}. To proceed, define
\begin{equation}\label{eq D}
\widehat{D}_{j}=\left(1+\tc^{\rm{b}}\right)\sum_{i=1}^{d}\left(c_{j}-c_{i}\right)^{+}\pi_{i}(t)-\left(1-\tc^{\rm{s}}\right)\sum_{i=1}^{d}\left(c_{i}-c_{j}\right)^{+}\pi_{i}(t),\quad j\in\{1,\cdots,d\}.
\end{equation}
We are now ready to provide an expression for the unknown constant $c$.

\begin{proposition}\label{lemma 2}
Recall that \eqref{eq Dt} and \eqref{eq D} imply $\widehat{D}\geq0$ and $\min_{i\in\{1,\cdots,d\}}\widehat{D}_{i}\leq0$. Hence,
\begin{equation}\label{eq j}
j=\underset{i\in\{1,\cdots,d\}}{\arg\max}\left\{\widehat{D}_{i};\widehat{D}_{i}\leq\widehat{D}\right\}
\end{equation}
is well-defined. Then
\begin{equation}\label{eq c}
c=\frac{\left(1+\tc^{\rm{b}}\right)\sum_{i=1}^{d}c_{i}\pi_{i}(t)\mathbf{1}_{c_{i}\leq c_{j}}+\left(1-\tc^{\rm{s}}\right)\sum_{i=1}^{d}\pi_{i}(t-)\mathbf{1}_{c_{i}>c_{j}}+\widehat{D}}{\left(1+\tc^{\rm{b}}\right)\sum_{i=1}^{d}\pi_{i}(t)\mathbf{1}_{c_{i}\leq c_{j}}+\left(1-\tc^{\rm{s}}\right)\sum_{i=1}^{d}\pi_{i}(t)\mathbf{1}_{c_{i}>c_{j}}}
\end{equation}
solves \eqref{eq Psi1} uniquely.
\end{proposition}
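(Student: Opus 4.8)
The plan is to \emph{identify} the unique solution of \eqref{eq Psi1}---its existence and uniqueness having already been established by the monotonicity discussion preceding the statement---and then to recognise that it coincides with the right-hand side of \eqref{eq c}. Two preliminary observations make \eqref{eq j} meaningful. First, $\widehat D\ge 0$ is immediate from \eqref{eq Dt}, since $V(t-)>0$ and the numerator there is a sum of nonnegative terms. Second, $\min_{i}\widehat D_{i}\le 0$: picking $i$ with $c_{i}=\min_{k}c_{k}$, every $(c_{i}-c_{k})^{+}$ in \eqref{eq D} vanishes, so $\widehat D_{i}\le 0$. Hence $\{i:\widehat D_{i}\le\widehat D\}\ne\emptyset$ and $j$ is well defined.

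Next I would introduce
\[
f(c)=\left(1+\tc^{\rm{b}}\right)\sum_{i=1}^{d}(c-c_{i})^{+}\pi_{i}(t)-\left(1-\tc^{\rm{s}}\right)\sum_{i=1}^{d}(c_{i}-c)^{+}\pi_{i}(t)-\widehat D,
\]
so that \eqref{eq Psi1} reads $f(c)=0$, and record the identity $\widehat D_{i}=f(c_{i})+\widehat D$ for every $i$, obtained by comparing with \eqref{eq D}. The discussion before the proposition shows that $f$ is continuous and strictly increasing, hence has a unique zero $c^{\star}$, and strict monotonicity gives, for each $i$,
\[
\widehat D_{i}\le\widehat D\iff f(c_{i})\le 0=f(c^{\star})\iff c_{i}\le c^{\star}.
\]
Since $\widehat D_{i}=f(c_{i})+\widehat D$ is nondecreasing in $c_{i}$, among the indices with $\widehat D_{i}\le\widehat D$ the one maximising $\widehat D_{i}$ is also one maximising $c_{i}$; combined with the equivalence above this shows that $c_{j}$ equals the largest breakpoint among $c_{1},\dots,c_{d}$ that does not exceed $c^{\star}$ (in particular $c_{j}$ does not depend on which maximiser is chosen in \eqref{eq j}, since distinct maximisers have equal $\widehat D$-value and hence, by strict monotonicity of $f$, equal $c$-value). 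Consequently
\[
\{i:c_{i}\le c_{j}\}=\{i:c_{i}\le c^{\star}\}\qquad\text{and}\qquad\{i:c_{i}>c_{j}\}=\{i:c_{i}>c^{\star}\}.
\]

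Finally I would substitute $c=c^{\star}$ into \eqref{eq Psi1}: on $\{i:c_{i}\le c^{\star}\}$ the positive parts collapse to $c^{\star}-c_{i}$ and $0$, and on $\{i:c_{i}>c^{\star}\}$ to $0$ and $c_{i}-c^{\star}$, so \eqref{eq Psi1} becomes affine in $c^{\star}$. Solving for $c^{\star}$, the denominator that appears,
\[
\left(1+\tc^{\rm{b}}\right)\sum_{i:c_{i}\le c^{\star}}\pi_{i}(t)+\left(1-\tc^{\rm{s}}\right)\sum_{i:c_{i}>c^{\star}}\pi_{i}(t),
\]
is strictly positive because the two $\pi_{i}(t)$-sums total $\sum_{i=1}^{d}\pi_{i}(t)=1$ and the coefficients $1+\tc^{\rm{b}}$ and $1-\tc^{\rm{s}}$ are positive. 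Replacing $c_{i}\le c^{\star}$ and $c_{i}>c^{\star}$ by $c_{i}\le c_{j}$ and $c_{i}>c_{j}$ via the displayed identity of index sets, and using $c_{i}\pi_{i}(t)=\pi_{i}(t-)$ whenever $\pi_{i}(t)>0$---which holds for every $i$ with $c_{i}>c_{j}$, since $c_{i}=0$ as soon as $\pi_{i}(t)=0$ while all $c_{i}\ge 0$---turns the closed form for $c^{\star}$ into exactly the right-hand side of \eqref{eq c}. This shows that \eqref{eq c} is the unique solution of \eqref{eq Psi1}.

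The only non-routine step is the second one: correctly placing $c_{j}$ as the largest breakpoint below the true solution $c^{\star}$, so that the indicator sets in \eqref{eq c} agree with those that arise when the positive parts in \eqref{eq Psi1} are resolved at $c=c^{\star}$. This is where the identity $\widehat D_{i}=f(c_{i})+\widehat D$, the strict monotonicity of $f$, and the precise form of the selection rule \eqref{eq j} must be combined; everything afterwards is a one-line solution of an affine equation together with a sign check on the denominator.
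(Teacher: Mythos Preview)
Your argument is correct and takes a genuinely different route from the paper. You work \emph{forward} from the unique root $c^{\star}$ of \eqref{eq Psi1}: using the identity $\widehat D_{i}=f(c_{i})+\widehat D$ and the strict monotonicity of $f$, you identify $c_{j}$ as the largest node $c_{i}$ not exceeding $c^{\star}$, deduce the equality of the index sets $\{c_{i}\le c_{j}\}=\{c_{i}\le c^{\star}\}$, and then solve the now-affine equation \eqref{eq Psi1} for $c^{\star}$. The paper instead works \emph{backward}: it takes $c$ as defined by \eqref{eq c}, rewrites it as $c=c_{j}+(\widehat D-\widehat D_{j})/(\cdots)$ to obtain $\mathbf 1_{c_{i}\le c}\ge\mathbf 1_{c_{i}\le c_{j}}$, and then---via a case split on whether $\max_{i}\widehat D_{i}\le\widehat D$ and, if not, via the companion index $j'=\arg\min\{\widehat D_{i}:\widehat D_{i}>\widehat D\}$---establishes the reverse inequality, concluding $\mathbf 1_{c_{i}\le c}=\mathbf 1_{c_{i}\le c_{j}}$; finally it plugs $c$ into both sides of \eqref{eq Psi1} and checks algebraically that they agree. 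Your approach is shorter and more conceptual, leaning on the existence/uniqueness already established before the proposition and avoiding the two-case analysis; the paper's verification is more self-contained and does not presuppose that the root lies where the index sets say it should, at the cost of the extra case split and the auxiliary index $j'$.
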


\begin{proof}
By the definition of $\widehat{D}_{j}$ given in \eqref{eq D} and by some basic computations, \eqref{eq c} is equivalent to
\[
c=c_{j}+\frac{\widehat{D}-\widehat{D}_{j}}{\left(1+\tc^{\rm{b}}\right)\sum_{i=1}^{d}\pi_{i}(t)\mathbf{1}_{c_{i}\leq c_{j}}+\left(1-\tc^{\rm{s}}\right)\sum_{i=1}^{d}\pi_{i}(t)\mathbf{1}_{c_{i}>c_{j}}},
\]
which implies $\mathbf{1}_{c_{i}\leq c}\geq\mathbf{1}_{c_{i}\leq c_{j}}$, for all $i\in\{1,\cdots,d\}$.

In the case $\max_{i\in\{1,\cdots,d\}}\widehat{D}_{i}\leq\widehat{D}$, we have $\mathbf{1}_{c_{i}\leq c_{j}}=1$, hence $\mathbf{1}_{c_{i}\leq c}\leq\mathbf{1}_{c_{i}\leq c_{j}}$, for all $i\in\{1,\cdots,d\}$. In the case $\max_{i\in\{1,\cdots,d\}}\widehat{D}_{i}>\widehat{D}$, define
\[
j'=\underset{i\in\{1,\cdots,d\}}{\arg\min}\left\{\widehat{D}_{i};\widehat{D}_{i}>\widehat{D}\right\}.
\]
Then \eqref{eq c} is equivalent to
\[
\begin{aligned}
c&=\frac{\left(1+\tc^{\rm{b}}\right)\sum_{i=1}^{d}c_{i}\pi_{i}(t)\mathbf{1}_{c_{i}<c_{j'}}+\left(1-\tc^{\rm{s}}\right)\sum_{i=1}^{d}\pi_{i}(t-)\mathbf{1}_{c_{i}\geq c_{j'}}+\widehat{D}}{\left(1+\tc^{\rm{b}}\right)\sum_{i=1}^{d}\pi_{i}(t)\mathbf{1}_{c_{i}<c_{j'}}+\left(1-\tc^{\rm{s}}\right)\sum_{i=1}^{d}\pi_{i}(t)\mathbf{1}_{c_{i}\geq c_{j'}}}\\
&=c_{j'}+\frac{\widehat{D}-\widehat{D}_{j'}}{\left(1+\tc^{\rm{b}}\right)\sum_{i=1}^{d}\pi_{i}(t)\mathbf{1}_{c_{i}<c_{j'}}+\left(1-\tc^{\rm{s}}\right)\sum_{i=1}^{d}\pi_{i}(t)\mathbf{1}_{c_{i}\geq c_{j'}}},
\end{aligned}
\]
which implies $\mathbf{1}_{c_{i}>c}\geq\mathbf{1}_{c_{i}>c_{j}}$, for all $i\in\{1,\cdots,d\}$. All in all, we have shown $\mathbf{1}_{c_{i}\leq c}=\mathbf{1}_{c_{i}\leq c_{j}}$, for all $i\in\{1,\cdots,d\}$.

Define next
\begin{gather*}
\Pi^{\rm{b}}=\left(1+\tc^{\rm{b}}\right)\sum_{i=1}^{d}\pi_{i}(t)\mathbf{1}_{c_{i}\leq c_{j}},\quad\Pi^{\rm{s}}=\left(1-\tc^{\rm{s}}\right)\sum_{i=1}^{d}\pi_{i}(t)\mathbf{1}_{c_{i}>c_{j}},\\
\overline{\Pi}^{\rm{b}}=\left(1+\tc^{\rm{b}}\right)\sum_{i=1}^{d}c_{i}\pi_{i}(t)\mathbf{1}_{c_{i}\leq c_{j}},\quad\overline{\Pi}^{\rm{s}}=\left(1-\tc^{\rm{s}}\right)\sum_{i=1}^{d}\pi_{i}(t-)\mathbf{1}_{c_{i}>c_{j}}.
\end{gather*}
Hence, after inserting $c$ by \eqref{eq c} into \eqref{eq Psi1}, the LHS of \eqref{eq Psi1} becomes
\[
\mathrm{LHS}=c\Pi^{\rm{b}}-\overline{\Pi}^{\rm{b}}=\frac{\Pi^{\rm{b}}\overline{\Pi}^{\rm{s}}-\Pi^{\rm{s}}\overline{\Pi}^{\rm{b}}+\Pi^{\rm{b}}\widehat{D}}{\Pi^{\rm{b}}+\Pi^{\rm{s}}},
\]
and the RHS of \eqref{eq Psi1} becomes
\[
\mathrm{RHS}=\overline{\Pi}^{\rm{s}}-c\Pi^{\rm{s}}+\widehat{D}=\frac{\Pi^{\rm{b}}\overline{\Pi}^{\rm{s}}-\Pi^{\rm{s}}\overline{\Pi}^{\rm{b}}-\Pi^{\rm{s}}\widehat{D}}{\Pi^{\rm{b}}+\Pi^{\rm{s}}}+\widehat{D}=\mathrm{LHS}.
\]
Therefore, $c$ defined by \eqref{eq c} indeed solves \eqref{eq Psi1}.
\end{proof}

\begin{remark}
In practice, we can apply both numerical and analytical methods to find the constant $c$. As suggested by \eqref{eq Psi1}, to find $c$ numerically, we can simply search for the minimum of the function
\[
c\mapsto\left|\left(1+\tc^{\rm{b}}\right)\sum_{i=1}^{d}\left(c-c_{i}\right)^{+}\pi_{i}(t)-\left(1-\tc^{\rm{s}}\right)\sum_{i=1}^{d}\left(c_{i}-c\right)^{+}\pi_{i}(t)-\widehat{D}\right|.
\]
Alternatively, by determining the index $j$ given by \eqref{eq j}, we can apply Proposition~\ref{lemma 2} to compute $c$ analytically.

If the analytical approach is implemented, we can speed up the algorithm by making the following observations. We expect the value of $c$ not to be far away from $1$, which is precisely the value in the case of no transaction costs and no dividends. As suggested by the proof of Proposition~\ref{lemma 2}, the family $(\widehat{D}_{i})_{i\in\{1,\cdots,d\}}$ has the same ranking as $(c_{i})_{i\in\{1,\cdots,d\}}$. Therefore, we proceed by ranking all $c_i$'s in ascending order and comparing $\widehat{D}_{k}$ with $\widehat{D}$, where
\[
k=\underset{i\in\{1,\cdots,d\}}{\arg\max}\left\{c_{i};c_{i}\leq1\right\}.
\]
If $\widehat{D}_{k}=\widehat{D}$, then $j=k$ and we are done. If $\widehat{D}_{k}>\widehat{D}$, then we repeatedly compute $\widehat{D}_i$ corresponding to a smaller $c_{i}<c_{k}$ each time until we find the exact index $j$. If $\widehat{D}_{k}<\widehat{D}$, then we simply go the other way around.
\qed
\end{remark}

Proposition~\ref{lemma 2} is applied to determine the constant $c$ used in \eqref{eq Phi} in order to compute $\psi(t)$. Note that, in this subsection, we take $\psi(t-)$ and $D(t-)$ as given. In the next subsection, we discuss how to compute $\psi(t-)$ and $D(t-)$ from the data.

\subsection{Practical considerations}\label{subsec 2.2}

For the preparation of the empirical study in the next section, we now introduce the method used to backtest the portfolio performance.

To begin with, assume that we are given the total market capitalizations and the daily returns for all stocks; denote these processes by $S(\cdot)=(S_{1}(\cdot),\cdots,S_{d}(\cdot))'$ and $r(\cdot)=(r_{1}(\cdot),\cdots,r_{d}(\cdot))'$, respectively. Assume that there are in total $N$ days. For all $l\in\{1,\cdots,N\}$, let $t_{l}$ denote the end of day $l$, at which the end of day total market capitalizations and the daily returns for day $l$ are available. Moreover, if we trade on day $l$, then we call day $l$ a trading day and the trade is made at time $t_{l}$.

Now focus on a specific trading day $l$ with $l\in\{1,\cdots,N\}$ and fix $i\in\{1,\cdots,d\}$ for the moment. In Subsection~\ref{subsec 2.1}, given $\psi(t_{l}-)$ and $D(t_{l}-)$, as well as the target weights specified by the corresponding portfolio at time $t_{l}$, we have shown how to compute $\psi(t_{l})$. In the following, we show how to obtain $\psi(t_{l}-)$ and $D(t_{l}-)$.

The daily return $r_{i}(t_{l})$ includes the dividends of stock $i$ if there are any. We decompose the daily return $r_{i}(t_{l})$ into two parts: the dividend rate $r^{D}_{i}(t_{l})$ and the realised rate $r^{R}_{i}(t_{l})$. The dividend rate $r^{D}_{i}(t_{l})$ is computed as
\begin{equation}\label{eq rD}
r^{D}_{i}(t_{l})=\max\left\{1+r_{i}(t_{l})-\frac{S_{i}(t_{l})}{S_{i}(t_{l-1})},0\right\}
\end{equation}
and yields the amount of dividends received at time $t_{l}$ for each unit of currency invested in stock $i$ at time $t_{l-1}$\footnote{The dividends computed from the dividend rate $r^{D}$ contain not only the actual stock dividends, but also other corporate actions. For example, AT\&T, which dominated the telephone market for most of the 20$^{\mathrm{th}}$ century, was broken up into eight smaller companies in 1984. This lead to a significant drop in the stock price. In our analysis below, we assume that the investor obtained cash in exchange (instead of stocks in the newly established companies).}. The realised rate $r^{R}_{i}(t_{l})$ is computed as
\[
r^{R}_{i}(t_{l})=r_{i}(t_{l})-r^{D}_{i}(t_{l})
\]
and yields the units of currency held in stock $i$ at time $t_{l}$ for each unit of currency invested in stock $i$ at time $t_{l-1}$.

The maximum is used in \eqref{eq rD} to make sure that the dividend rate is nonnegative. Indeed, occasionally the data may suggest $S_{i}(t_{l-1})(1+r_{i}(t_{l}))<S_{i}(t_{l})$. This can happen, for example, when company $i$ issues extra stocks at time $t_{l}$. In this case, we simply assume that there are no dividends paid at time $t_{l}$. 

A special situation requires us to pay extra attention. A few times, some stock $i$ is delisted from the market at time $t_{l}$, for example, due to bankruptcy or merger. In this case, we still have data for $r_{i}(t_{l})$, but not for $S_{i}(t_{l})$. To deal with this situation, we assume that there are no dividends paid in stock $i$ at time $t_{l}$. As a result, we have $r^{D}_{i}(t_{l})=0$ and $r^{R}_{i}(t_{l})=r_{i}(t_{l})$ for such stock $i$. To close the position in stock $i$, we assume that one needs to pay transaction costs.

Without loss of generality, assume that there are $n\geq1$ days (including the trading day $l$) involved since the last trading day, i.e., the last trading day before $l$ is $l-n$. For all $k\in\{l-n+1,\cdots,l\}$, we compute $r^{D}(t_{k})$ and $r^{R}(t_{k})$ as above. In particular, if some stock $i$ in the portfolio is delisted from the market at time $t_{u}$, for some $u\in\{l-n+1,\cdots,l-1\}$, then we set $r^{R}_{i}(t_{v})=r^{D}_{i}(t_{v})=0$, for all $v\in\{u+1,\cdots,l\}$.

Then given $\psi(t_{l-n})$, we compute
\[
\psi_{i}(t_{l}-)=\psi_{i}(t_{l-n})\prod_{k=l-n+1}^{l}\left(1+r^{R}_{i}(t_{k})\right),\quad i\in\{1,\cdots,d\}.
\]
Since all dividends paid between two consecutive trading days are only reinvested at time $t_{l}$, the total dividends available for reinvesting are computed by
\[
D(t_{l}-)=\sum_{i=1}^{d}\psi_{i}(t_{l-n})\sum_{k=l-n+1}^{l}r^{D}_{i}(t_{k})\prod_{u=l-n+1}^{k-1}\left(1+r^{R}_{i}(t_{u})\right).
\]

\section{Examples and empirical results}\label{sec E}

In this section, we analyze the performance of several portfolios empirically. The target weights are expressed in terms of the market weights $\mu(\cdot)=\big(\mu_{1}(\cdot),\cdots,\mu_{d}(\cdot)\big)'$ with components
\[
\mu_{j}(\cdot)=\frac{S_{j}(\cdot)}{\sum_{i=1}^{d}S_{i}(\cdot)},\quad j\in\{1,\cdots,d\}.
\]
In Subsection~\ref{sec div}, we also propose a method to smooth transaction costs.

We shall consider the largest $d$ stocks. We will vary the number $d$ between 100, 300, and 500. The constituent list (the list of the top $d$ stocks) is renewed either weekly, monthly, or quarterly. Whenever we renew the constituent list, we keep the $d$ stocks with the largest total market capitalizations at that time. We trade only these $d$ stocks afterwards until we renew the constituent list again. If any of these stocks stops to exist in the market due to any reason, we simply invest in the remaining stocks without adding a new stock to the list before we renew it next time. Note that renewing the constituent list implies trading to replace the old top $d$ stocks with the new top $d$ stocks. We trade with a specific frequency, which can be either daily, weekly, or monthly. For research on optimal trading frequency, we refer to \cite{MR3766056}.

At time $t_{0}$, we take the transaction costs due to initializing a portfolio as sunk cost, i.e., we set $\TC(t_0)=0$. Moreover, we start a portfolio with initial wealth $V(t_0)=1000$. Note that unless otherwise mentioned, the logarithmic scale is used when plotting $V(\cdot)$ and $\TC(\cdot)$ for the purpose of better interpretability. To simplify the analysis, we impose a uniform transaction cost rate $\tc$ on both buying and selling the stocks, i.e., we set $\tc^{\rm{b}}=\tc^{\rm{s}}=\tc$.

For each example, we provide tables with the yearly returns, the excess returns (relative to the corresponding index tracking portfolio), the standard deviations of the yearly returns, the Sharpe ratios\footnote{To compute the Sharpe ratios of the portfolios and the indices, the one-year U.S.~Treasury yields are used. The data of these yields can be downloaded from \url{https://www.federalreserve.gov}.}, and the wealth and the cumulative transaction costs at the end of the investment period of the portfolios.

\subsection*{Data source}

The data of the total market capitalizations $S(\cdot)$ and the daily returns $r(\cdot)$ is downloaded from the CRSP US Stock Database\footnote{See \url{http://www.crsp.com/products/research-products/crsp-us-stock-databases} for details.}. This database contains the traded stocks on all major US exchanges. More precisely, we focus on ordinary common stocks\footnote{Those stocks in CRSP which have `Share Code' 10, 11, or 12.}. The data starts January 2$^{\rm{nd}}$, 1962 and ends December 30$^{\rm{th}}$, 2016.

The total market capitalizations are computed by multiplying the numbers of outstanding shares with the share prices, and are essential in determining the target weights. The daily returns include dividends but also delisting returns in case stocks get delisted (for example, the recovery rate in case a traded firm goes bankrupt).

\subsection{Index tracking portfolio}\label{sec I}

In this subsection, we introduce the index tracking portfolio. This portfolio is used to benchmark the performance of other portfolios studied in the following subsections. The index tracking portfolio has target weights
\[
\pi_{j}(\cdot)=\mu_{j}(\cdot),\quad j\in\{1,\cdots,d\}.
\]
Note that this portfolio is rebalanced only when the constituent list changes or when dividends are reinvested. 

The index tracking portfolio includes the effects of paying transaction costs and reinvesting dividends. In contrast, the capitalization index with wealth process
\[
\sum_{i=1}^{d}S_{i}(\cdot)\times\frac{1000}{\sum_{i=1}^{d}S_{i}(t_0)}
\]
does not take transaction costs and dividends into consideration.

In the following, we examine the performance of the index tracking portfolio under different trading frequencies, renewing frequencies, as well as constituent list sizes $d$, when there are no transaction costs, i.e., when $\tc=0$, and when $\tc=0.5\%$ and $\tc=1\%$, respectively. These numbers are consistent with the transaction cost estimates in \cite{stoll1983transaction}, \cite{keim1997transactions}, and \cite{fong2017best}.

\subsubsection*{Varying the trading frequency}

We fix the constituent list size $d=100$ and use monthly renewing frequency.  Table~\ref{tab 8} shows the performance of the index tracking portfolio and the corresponding capitalization index under daily, weekly, and monthly trading frequencies, respectively. Note that the capitalization index does not depend on the trading frequency. As expected, with the same trading frequency, the portfolio performs worse under a larger transaction cost rate $\tc$. In addition, the portfolio outperforms the corresponding index, which implies that the dividends paid exceed the transaction costs imposed even if $\tc=1\%$. In Figure~\ref{fg CI_1}, the wealth processes of the daily traded index tracking portfolio and the corresponding capitalization index are plotted.

\begin{table}[h!]
\begin{tabular}{ l | c | c c c | c c c | c c c }
 & CI & IT$^{d}_{0}$ & IT$^{d}_{0.5}$ & IT$^{d}_{1}$ & IT$^{w}_{0}$ & IT$^{w}_{0.5}$ & IT$^{w}_{1}$ & IT$^{m}_{0}$ & IT$^{m}_{0.5}$ & IT$^{m}_{1}$ \\
 \hline
Yearly return & 8.84 & 10.30 & 10.09 & 9.89 & 10.30 & 10.10 & 9.90 & 10.27 & 10.08 & 9.89 \\
 \hline
Std & 16.59 & 16.87 & 16.84 & 16.81 & 16.88 & 16.85 & 16.82 & 16.88 & 16.86 & 16.83 \\
 \hline
Sharpe ratio & 0.22 & 0.30 & 0.29 & 0.28 & 0.30 & 0.29 & 0.28 & 0.30 & 0.29 & 0.28 \\
\hline
Wealth & 54.5 & 111.7 & 100.7 & 90.7 & 111.3 & 100.7 & 91.0 & 109.7 & 99.7 & 90.6\\
\hline
TC & & & 2.7 & 5.0 & & 2.5 & 4.7 & & 2.3 & 4.3\\
\hline
\end{tabular}
\caption{Yearly returns in percentage, standard deviations of yearly returns (Std), Sharpe ratios, and the wealth and the cumulative transaction costs (TC) in thousands at the end of the investment period of the index tracking portfolio (IT) and the corresponding capitalization index (CI) under different trading frequencies and transaction cost rates $\tc$ with $d=100$ and monthly renewing frequency. The subscript $x$ corresponds to $\tc=x\%$ and the superscripts $d$, $w$, and $m$ indicate daily, weekly, and monthly trading frequencies, respectively.}
\label{tab 8}
\end{table}

\begin{figure}[h!]
\includegraphics[width=\textwidth]{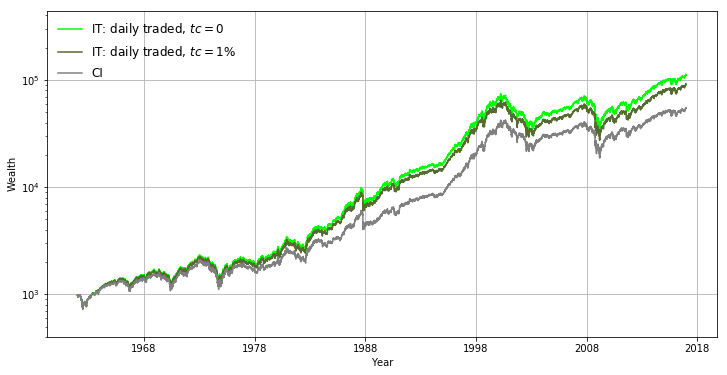}
\caption{The wealth processes of the index tracking portfolio (IT) and the corresponding capitalization index (CI) on logarithmic scale under different transaction cost rates $\tc$ with $d=100$, daily trading frequency, and monthly renewing frequency. The weekly and the monthly traded portfolio performs similarly to the daily traded portfolio under the same transaction cost rate $\tc$.}
\label{fg CI_1}
\end{figure}

\subsubsection*{Varying the renewing frequency}

Still fixing the constituent list size $d=100$, we now use daily trading frequency and vary the renewing frequency between weekly, monthly, and quarterly frequencies, respectively. As shown in Figure~\ref{fg CI_2} and Table~\ref{tab 9}, under the same transaction cost rate $\tc$, the less frequently the constituent list is renewed, the better the portfolio performs. As trades are made when we renew the constituent list, renewing more frequently will impose larger transaction costs, which impacts the performance of the portfolio to a higher degree. Additionally, the more frequently the constituent list is renewed, the more sensitive the portfolio is to a larger transaction cost rate $\tc$.

\begin{figure}
\includegraphics[width=\textwidth]{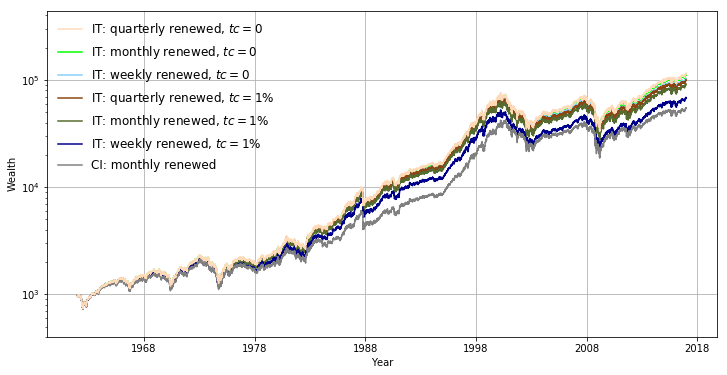}
\caption{The wealth processes of the index tracking portfolio (IT) and the corresponding capitalization index (CI) on logarithmic scale under different renewing frequencies and transaction cost rates $\tc$ with $d=100$ and daily trading frequency. The performance of the weekly renewed portfolio when $\tc=0$ is similar to that of the quarterly renewed portfolio when $\tc=1\%$. The weekly and the quarterly renewed capitalisation index is not very different from the monthly renewed one.}
\label{fg CI_2}
\end{figure}

\begin{table}
\begin{tabular}{ l | c c c c | c c c c }
 & CI$^{W}$ & IT$^{W}_{0}$ & IT$^{W}_{0.5}$ & IT$^{W}_{1}$ & CI$^{Q}$ & IT$^{Q}_{0}$ & IT$^{Q}_{0.5}$ & IT$^{Q}_{1}$ \\
 \hline
Yearly return & 8.85 & 10.14 & 9.73 & 9.33 & 8.82 & 10.34 & 10.20 & 10.06 \\
 \hline
Std & 16.66 & 16.89 & 16.84 & 16.79 & 16.44 & 16.83 & 16.81 & 16.79 \\
 \hline
Sharpe ratio & 0.22 & 0.29 & 0.27 & 0.25 & 0.22 & 0.31 & 0.30 & 0.29 \\
\hline
Wealth & 54.5 & 102.2 & 83.5 & 68.1 & 54.4 & 114.2 & 106.5 & 99.2 \\
\hline
TC & & & 4.2 & 7.2 & & & 2.0 & 3.8 \\
\hline
\end{tabular}
\caption{Yearly returns in percentage, standard deviations of yearly returns (Std), Sharpe ratios, and the wealth and the cumulative transaction costs (TC) in thousands at the end of the investment period of the index tracking portfolio (IT) and the corresponding capitalization index (CI) under different renewing frequencies and transaction cost rates $\tc$ with $d=100$ and daily trading frequency. The subscript $x$ corresponds to $\tc=x\%$ and the superscripts $W$ and $Q$ indicate weekly and quarterly renewing frequencies, respectively.}
\label{tab 9}
\end{table}

\subsubsection*{Varying the constituent list size $d$}

With daily trading and monthly renewing frequencies, we now backtest the performance of the index tracking portfolio under different constituent list sizes $d$. As shown in Figure~\ref{fg CI_3} and Table~\ref{tab 10}, the portfolio outperforms the corresponding index even with transaction cost rate $\tc=1\%$. The more stocks the constituent list contains, the better the portfolio performs.

\begin{figure}
\includegraphics[width=\textwidth]{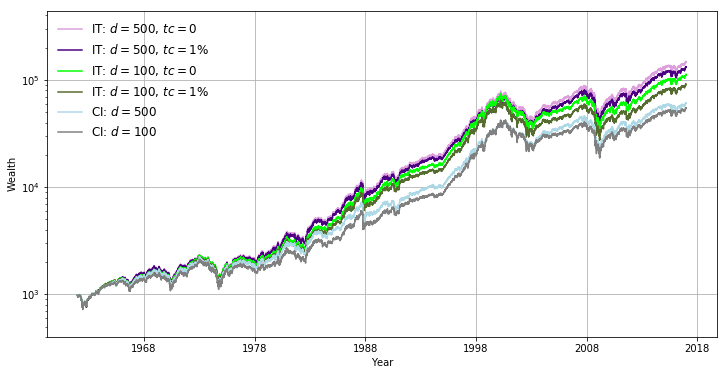}
\caption{The wealth processes of the index tracking portfolio (IT) and the corresponding capitalization index (CI) on logarithmic scale under different constituent list sizes $d$ and transaction cost rates $\tc$ with daily trading and monthly renewing frequencies. For both the portfolio and the index, the wealth processes with $d=300$ are omitted. Everything else equal, they would lie between the plotted ones with $d=100$ and with $d=500$.}
\label{fg CI_3}
\end{figure}

\begin{table}
\begin{tabular}{ l | c c c c | c c c c }
 & CI$^{300}$ & IT$^{300}_{0}$ & IT$^{300}_{0.5}$ & IT$^{300}_{1}$ & CI$^{500}$ & IT$^{500}_{0}$ & IT$^{500}_{0.5}$ & IT$^{500}_{1}$ \\
 \hline
Yearly return & 8.94 & 10.61 & 10.46 & 10.31 & 9.01 & 10.83 & 10.71 & 10.59 \\
 \hline
Std & 16.14 & 16.57 & 16.55 & 16.53 & 16.15 & 16.61 & 16.60 & 16.58 \\
 \hline
Sharpe ratio & 0.23 & 0.33 & 0.32 & 0.31 & 0.24 & 0.34 & 0.33 & 0.33 \\
\hline
Wealth & 58.7 & 132.5 & 123.1 & 114.3 & 60.6 & 147.2 & 139.0 & 131.1 \\
\hline
TC & & & 2.4 & 4.5 & & & 2.3 & 4.3 \\
\hline
\end{tabular}
\caption{Yearly returns in percentage, standard deviations of yearly returns (Std), Sharpe ratios, and the wealth and the cumulative transaction costs (TC) in thousands at the end of the investment period of the index tracking portfolio (IT) and the corresponding capitalization index (CI) under different constituent list sizes $d$ and transaction cost rates $\tc$ with daily trading and monthly renewing frequencies. The subscript $x$ corresponds to $\tc=x\%$ and the superscripts $300$ and $500$ indicate $d=300$ and $d=500$, respectively.}
\label{tab 10}
\end{table}

\subsection{Equally-weighted portfolio}

This subsection examines the equally-weighted portfolio (see \cite{benartzi2001naive} and \cite{windcliff20041} for a discussion of this portfolio in the context of defined contribution plans, and \cite{demiguel2007optimal} for a deep study of its properties). Here, the target weights are given by
\[
\pi_{j}(\cdot)=\frac{1}{d},\quad j\in\{1,\cdots,d\}.
\]

For each portfolio with a specific trading frequency, a specific  renewing frequency, and a specific constituent list size $d$, we examine its performance when there are no transaction costs, i.e., when $\tc=0$, and when $\tc=0.5\%$ and $\tc=1\%$, respectively. As shown in the following, the equally-weighted portfolio outperforms the corresponding index tracking portfolio when there are no transaction costs. This well-behaved performance of the equally-weighted portfolio within a frictionless market is popular in the academic literature. However, the equally-weighted portfolio is very sensitive to transaction costs. Its performance is strongly compromised even with a small transaction cost rate $\tc=0.5\%$.

\subsubsection*{Varying the trading frequency}

Let us fix $d=100$ and apply monthly renewing frequency. Figure~\ref{fg EW_1} plots and Table~\ref{tab 6} summarises the wealth processes of the equally-weighted and the corresponding index tracking portfolio under different trading frequencies and transaction cost rates $\tc$. When there are no transaction costs, i.e., when $\tc=0$, the equally-weighted portfolio outperforms the corresponding index tracking portfolio under all three different trading frequencies. A similar observation is also provided in \cite{banner2018diversification}. In addition, the more frequently the portfolio is traded, the better it performs. Trading more frequently also allows to reinvest the dividends faster, which helps to enhance the portfolio performance.

\begin{figure}
\includegraphics[width=\textwidth]{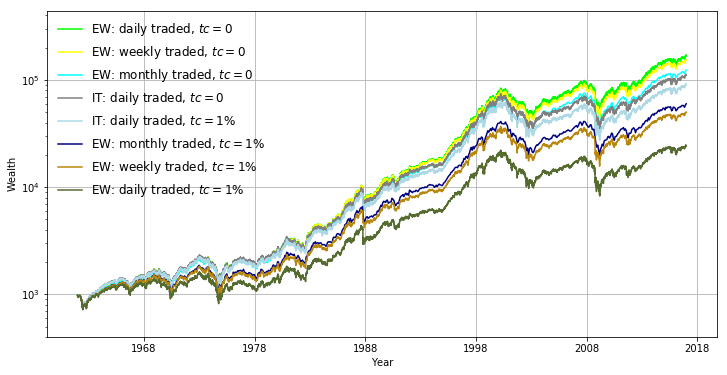}
\caption{The wealth processes of the equally-weighted portfolio (EW) and the corresponding index tracking portfolio (IT) on logarithmic scale under different trading frequencies and transaction cost rates $\tc$ with $d=100$ and monthly renewing frequency. Under the same transaction cost rate $\tc$, the weekly and the monthly traded index tracking portfolio performs similarly to the one traded daily.}
\label{fg EW_1}
\end{figure}

\begin{table}
\begin{tabular}{ l | c c c | c c c | c c c }
 & EW$^{d}_{0}$ & EW$^{d}_{0.5}$ & EW$^{d}_{1}$ & EW$^{w}_{0}$ & EW$^{w}_{0.5}$ & EW$^{w}_{1}$ & EW$^{m}_{0}$ & EW$^{m}_{0.5}$ & EW$^{m}_{1}$ \\
 \hline
Yearly return & 11.10 & 9.19 & 7.31 & 10.94 & 9.82 & 8.72 & 10.53 & 9.81 & 9.10 \\
 \hline
Excess return & 0.80 & -0.9 & -2.58 & 0.64 & -0.28 & -1.18 & 0.26 & -0.27 & -0.79 \\
 \hline
Std & 16.83 & 16.65 & 16.48 & 16.93 & 16.81 & 16.69 & 17.00 & 16.91 & 16.83 \\
 \hline
Sharpe ratio & 0.35 & 0.24 & 0.13 & 0.34 & 0.28 & 0.21 & 0.31 & 0.27 & 0.23 \\
\hline
Wealth & 168.2 & 64.0 & 24.3 & 153.8 & 87.7 & 50.0 & 123.7 & 86.1 & 59.9 \\
\hline
TC & & 15.5 & 15.0 & & 11.4 & 14.8 & & 7.2 & 10.9 \\
\hline
\end{tabular}
\caption{Yearly returns and excess returns (with respect to the index tracking portfolio shown in Table~\ref{tab 8}) in percentage, standard deviations of yearly returns (Std), Sharpe ratios, and the wealth and the cumulative transaction costs (TC) in thousands at the end of the investment period of the equally-weighted portfolio (EW) under different trading frequencies and transaction cost rates $\tc$ with $d=100$ and monthly renewing frequency. The subscript $x$ corresponds to $\tc=x\%$ and the superscripts $d$, $w$, and $m$ indicate daily, weekly, and monthly trading frequencies, respectively.}
\label{tab 6}
\end{table}

When transaction costs are imposed, Figure~\ref{fg EW_1} and Table~\ref{tab 6} suggest that under the same transaction cost rate $\tc$, the more frequently the portfolio is traded, the larger the decrease in portfolio performance is. The performance of the equally-weighted portfolio is strongly affected by transaction costs. Even with $\tc=0.5\%$, the corresponding index tracking portfolio outperforms the equally-weighted portfolio. However, slowing down trading helps to reduce the influence of transaction costs. Indeed, the performance of the monthly traded equally-weighted portfolio when $\tc=1\%$ is similar to that of the daily traded one when $\tc=0.5\%$. As shown in Figure~\ref{fg EW_2}, the cumulative transaction costs paid from a monthly traded equally-weighted portfolio when $\tc=1\%$ are smaller than that from a daily traded one when $\tc=0.5\%$.

\begin{figure}
\includegraphics[width=\textwidth]{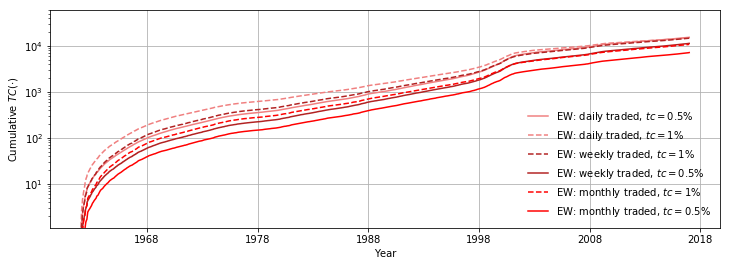}
\caption{Cumulative transaction costs on logarithmic scale of the equally-weighted portfolio (EW) under different trading frequencies and transaction cost rates $\tc$ with $d=100$ and monthly renewing frequency.}
\label{fg EW_2}
\end{figure}

We now study the sensitivity of the Sharpe ratio with respect to the transaction cost rate $\tc$. Specifically, we compute the Sharpe ratios of the monthly traded equally-weighted and index tracking portfolio for $\tc\in\{0,0.01\%,0.02\%,\cdots,0.5\%\}$. As plotted in Figure~\ref{fg EW_3}, the Sharpe ratios of both the equally-weighted and the index tracking portfolio decrease as $\tc$ becomes larger. On the left hand side of the intersection when $\tc<0.22\%$, the equally-weighted portfolio has a higher Sharpe ratio. On the right hand side of the intersection when $\tc>0.22\%$, the inverse situation holds. This indicates that the equally-weighted portfolio depends more on transaction costs than the index tracking portfolio.

\begin{figure}
\includegraphics[width=\textwidth]{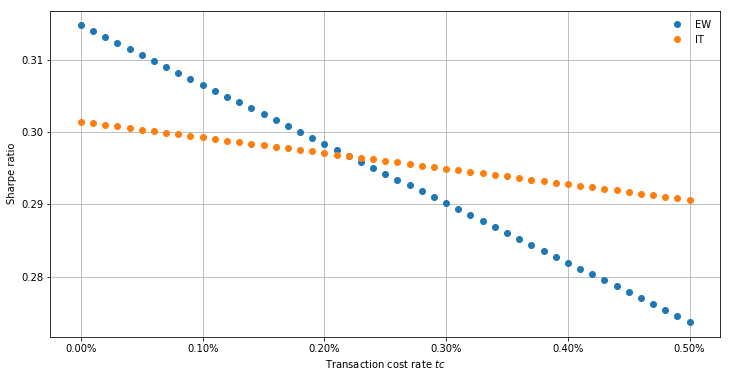}
\caption{Sharpe ratios of the equally-weighted portfolio (EW) and the index tracking portfolio (IT) under different transaction cost rates $\tc$ with $d=100$, monthly trading frequency, and monthly renewing frequency.}
\label{fg EW_3}
\end{figure}

\subsubsection*{Varying the renewing frequency}

Now we examine the performance of the equally-weighted portfolio with $d=100$, daily trading frequency, and under weekly, monthly, and quarterly renewing frequencies, respectively. As shown in Figure~\ref{fg EW_4} and Table~\ref{tab 11}, under the same transaction cost rate $\tc$, the less frequently the constituent list is renewed, the better the portfolio performs. With $\tc=0.5\%$, the equally-weighted portfolio already performs worse than the corresponding index tracking portfolio. In particular, the portfolio with a more frequent renewing frequency is more sensitive to transaction costs. As studied in more detail in Subsection~\ref{sec div}, the reason behind these observations is that trading on renewing days incurs extremely large transaction costs compared with trading on other days when the constituent list is not renewed. These large transaction costs paid on renewing days strongly impact the portfolio performance.

\begin{figure}
\includegraphics[width=\textwidth]{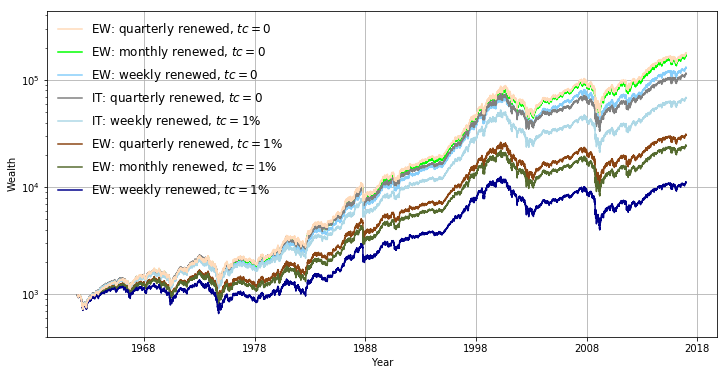}
\caption{The wealth processes of the equally-weighted portfolio (EW) and the corresponding index tracking portfolio (IT) on logarithmic scale under different renewing frequencies and transaction cost rates $\tc$ with $d=100$ and daily trading frequency. For the index tracking portfolio, the wealth processes of the quarterly renewed one with $\tc=0$ and the weekly renewed one with $\tc=1\%$ are plotted. The omitted wealth processes of the index tracking portfolio lie between the plotted ones.}
\label{fg EW_4}
\end{figure}

\begin{table}
\begin{tabular}{ l | c c c | c c c }
 & EW$^{W}_{0}$ & EW$^{W}_{0.5}$ & EW$^{W}_{1}$ & EW$^{Q}_{0}$ & EW$^{Q}_{0.5}$ & EW$^{Q}_{1}$ \\
 \hline
Yearly return & 10.62 & 8.20 & 5.83 & 11.21 & 9.47 & 7.76 \\
 \hline
Excess return & 0.48 & -1.53 & -3.50 & 0.87 & -0.73 & -2.30 \\
 \hline
Std & 16.95 & 16.71 & 16.50 & 16.82 & 16.65 & 16.50 \\
 \hline
Sharpe ratio & 0.32 & 0.18 & 0.04 & 0.36 & 0.26 & 0.16 \\
\hline
Wealth & 129.8 & 37.8 & 11.0 & 177.6 & 73.9 & 30.8 \\
\hline
TC & & 12.9 & 10.5 & & 16.0 & 16.5 \\
\hline
\end{tabular}
\caption{Yearly returns and excess returns (with respect to the index tracking portfolio shown in Table~\ref{tab 9}) in percentage, standard deviations of yearly returns (Std), Sharpe ratios, and the wealth and the cumulative transaction costs (TC) in thousands at the end of the investment period of the equally-weighted portfolio (EW) under different renewing frequencies and transaction cost rates $\tc$ with $d=100$ and daily trading frequency. The subscript $x$ corresponds to $\tc=x\%$ and the superscripts $W$ and $Q$ indicate weekly and quarterly renewing frequencies, respectively.}
\label{tab 11}
\end{table}

The cumulative transaction costs of the equally-weighted portfolio of Table~\ref{tab 11} are shown in Figure~\ref{fg EW_5}. Earlier on, the cumulative transaction costs are higher when weekly renewed than when monthly or quarterly renewed due to the large transaction costs associated with the renewal days. However, later on, the cumulative transaction costs of the weekly renewed portfolio are smaller. The reason is that the weekly renewed portfolio performs worse than the monthly or the quarterly renewed portfolio, hence the transaction costs imposed as a proportion of the portfolio wealth are also smaller.

\begin{figure}
\includegraphics[width=\textwidth]{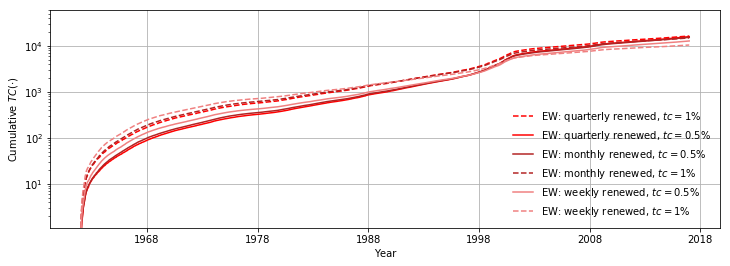}
\caption{Cumulative transaction costs on logarithmic scale of the equally-weighted portfolio (EW) under different renewing frequencies and transaction cost rates $\tc$ with $d=100$ and daily trading frequency.}
\label{fg EW_5}
\end{figure}

\subsubsection*{Varying the market size $d$}

With daily trading and monthly renewing frequencies, Figure~\ref{fg EW_6} plots and Table~\ref{tab 7} summarises the wealth processes of the equally-weighted and the corresponding index tracking portfolio under different constituent list sizes $d$. The more stocks the constituent list contains, the better the portfolio performs under the same transaction cost rate $\tc$. Again, its performance is reduced by transaction costs. Even with $d=500$ and $\tc=0.5\%$, the equally-weighted portfolio performs worse than the corresponding index tracking portfolio. In addition, the portfolio with a larger constituent list size $d$ is not necessarily more sensitive to transaction costs. Figure~\ref{fg EW_7} plots the cumulative transaction costs generated by the portfolio of Table~\ref{tab 7}.

\begin{figure}
\includegraphics[width=\textwidth]{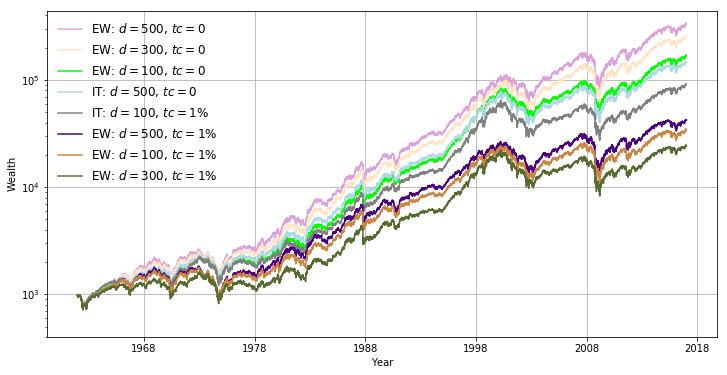}
\caption{The wealth processes of the equally-weighted portfolio (EW) and the corresponding index tracking portfolio (IT) on logarithmic scale under different constituent list sizes $d$ and transaction cost rates $\tc$ with daily trading and monthly renewing frequencies. For the index tracking portfolio, the wealth processes of the one with $d=500$ when $\tc=0$ and the one with $d=100$ when $\tc=1\%$ are plotted. The omitted wealth processes of the index tracking portfolio lie between the plotted ones.}
\label{fg EW_6}
\end{figure}

\begin{table}
\begin{tabular}{ l | c c c | c c c }
 & EW$^{300}_{0}$ & EW$^{300}_{0.5}$ & EW$^{300}_{1}$ & EW$^{500}_{0}$ & EW$^{500}_{0.5}$ & EW$^{500}_{1}$ \\
 \hline
Yearly return & 11.92 & 9.92 & 7.96 & 12.52 & 10.46 & 8.43 \\
 \hline
Excess return & 1.31 & -0.54 & -2.35 & 1.69 & -0.25 & -2.16 \\
 \hline
Std & 16.59 & 16.43 & 16.29 & 17.07 & 16.90 & 16.74 \\
 \hline
Sharpe ratio & 0.41 & 0.29 & 0.17 & 0.43 & 0.31 & 0.19 \\
\hline
Wealth & 255.3 & 93.6 & 34.3 & 332.8 & 118.4 & 42.1 \\
\hline
TC & & 21.8 & 20.5 & & 27.3 & 24.8 \\
\hline
\end{tabular}
\caption{Yearly returns and excess returns (with respect to the index tracking portfolio shown in Table~\ref{tab 10}) in precenatge, standard deviations of yearly returns (Std), Sharpe ratios, and the wealth and the cumulative transaction costs (TC) in thousands at the end of the investment period of the equally-weighted portfolio (EW) under different constituent list sizes $d$ and transaction cost rates $\tc$ with daily trading and monthly renewing frequencies. The subscript $x$ corresponds to $\tc=x\%$ and the superscripts $300$ and $500$ indicate $d=300$ and $d=500$, respectively.}
\label{tab 7}
\end{table}

\begin{figure}
\includegraphics[width=\textwidth]{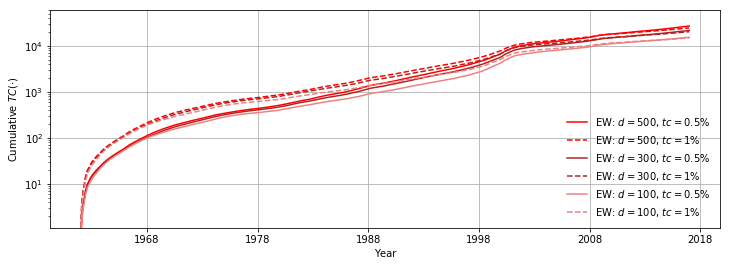}
\caption{Cumulative transaction costs on logarithmic scale of the equally-weighted portfolio (EW) under different constituent list sizes $d$ and transaction cost rates $\tc$ with daily trading frequency and monthly renewing frequency.}
\label{fg EW_7}
\end{figure}

\subsection{Entropy-weighted portfolio}\label{sub ep}

In this subsection, we consider the entropy-weighted portfolio (see Section 2.3 in \cite{MR1894767} and Example~5.3 in \cite{karatzas2017trading}), which relies on target weights
\[
\pi_{j}(\cdot)=\frac{\mu_{j}(\cdot)\log\mu_{j}(\cdot)}{\sum_{i=1}^{d}\mu_{i}(\cdot)\log\mu_{i}(\cdot)},\quad j\in\{1,\cdots,d\}.
\]

In the following, we examine the performance of the entropy-weighted portfolio under specific configurations when there are no transaction costs, i.e., when $\tc=0$, and when $\tc=0.5\%$. The performance of the entropy-weighted portfolio is less sensitive to transaction costs and is better when $\tc=0.5\%$, compared with that of the equally-weighted portfolio.

\subsubsection*{Varying the trading frequency}

As before, when backtesting the portfolio under different trading frequencies, we set the constituent list size $d=100$ and apply monthly renewing frequency. Figure~\ref{fg ETP_1} displays and Table~\ref{tab 4} summarises the wealth processes of the entropy-weighted and the corresponding index tracking portfolio under different trading frequencies. Compared with the equally-weighted portfolio summarised in Table~\ref{tab 6}, the entropy-weighted portfolio performs worse (but still outperforms the corresponding index tracking portfolio) when there are no transaction costs, i.e., when $\tc=0$. However, opposite to the equally-weighted portfolio, the weekly and the monthly traded entropy-weighted portfolio still outperforms the corresponding index tracking portfolio when $\tc=0.5\%$. 

\begin{figure}
\includegraphics[width=\textwidth]{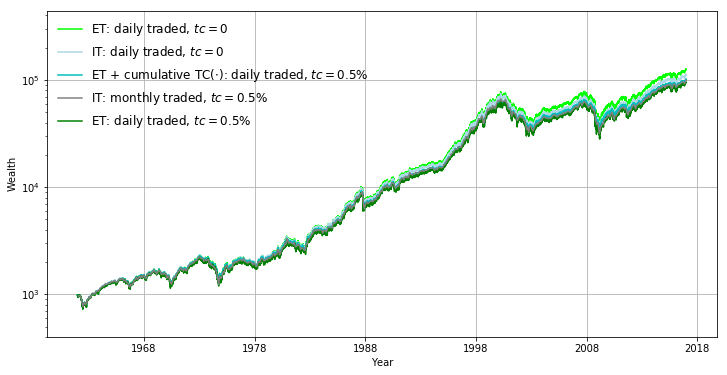}
\caption{The wealth processes of the entropy-weighted portfolio (ET) and the corresponding index tracking portfolio (IT) on logarithmic scale under different trading frequencies and transaction cost rates $\tc$ with $d=100$ and monthly renewing frequency. For both the entropy-weighted and the index tracking portfolio, the omitted wealth processes of Table~\ref{tab 4} lie between the plotted ones. The sum of the wealth process and of the cumulative transaction costs of the daily traded entropy-weighted portfolio when $\tc=0.5\%$ is also plotted. Note that the sum is below the wealth process of the daily traded entropy-weighted portfolio when $\tc=0$.}
\label{fg ETP_1}
\end{figure}

\begin{table}
\begin{tabular}{ l | c c c c | c c c c | c c c c }
 & IT$^{d}_{0}$ & ET$^{d}_{0}$ & IT$^{d}_{0.5}$ & ET$^{d}_{0.5}$ & IT$^{w}_{0}$ & ET$^{w}_{0}$ & IT$^{w}_{0.5}$ & ET$^{w}_{0.5}$ & IT$^{m}_{0}$ & ET$^{m}_{0}$ & IT$^{m}_{0.5}$ & ET$^{m}_{0.5}$ \\
 \hline
YR & 10.30 & 10.53 & 10.09 & 9.97 & 10.30 & 10.50 & 10.10 & 10.12 & 10.27 & 10.40 & 10.08 & 10.11 \\
 \hline
ER & & 0.23 & & -0.12 & & 0.21 & & 0.03 & & 0.14 & & 0.03 \\
 \hline
Std & 16.87 & 16.90 & 16.84 & 16.83 & 16.88 & 16.92 & 16.85 & 16.88 & 16.88 & 16.94 & 16.86 & 16.90 \\
 \hline
SR & 0.30 & 0.32 & 0.29 & 0.28 & 0.30 & 0.31 & 0.29 & 0.29 & 0.30 & 0.31 & 0.29 & 0.29 \\
\hline
W & 111.7 & 125.1 & 100.7 & 94.6 & 111.3 & 123.1 & 100.7 & 101.7 & 109.7 & 116.9 & 99.7 & 100.8 \\
\hline
TC & & & 2.7 & 6.4 & & & 2.5 & 4.5 & & & 2.3 & 3.4 \\
\hline
\end{tabular}
\caption{Yearly returns (YR) and excess returns (ER) in precentage, standard deviations of yearly returns (Std), Sharpe ratios (SR), and the wealth (W) and the cumulative transaction costs (TC) in thousands at the end of the investment period of the entropy-weighted portfolio (ET) and the corresponding index tracking portfolio (IT) under different trading frequencies and transaction cost rates $\tc$ with $d=100$ and monthly renewing frequency. The subscript $x$ corresponds to $\tc=x\%$ and the superscripts $d$, $w$, and $m$ indicate daily, weekly, and monthly trading frequencies, respectively.}
\label{tab 4}
\end{table}

Over a large time horizon, the loss in the portfolio wealth resulting from paying transaction costs is usually higher than the cumulative transaction costs imposed. This is exhibited in Figure~\ref{fg ETP_1}, which also plots the sum of the wealth process and of the cumulative transaction costs of the entropy-weighted portfolio when $\tc=0.5\%$. Notice that the wealth process when $\tc=0$ is above this sum. Indeed, paying transaction costs not only takes money out of the portfolio, but also deprives the opportunity for making potential gains.

\subsubsection*{Varying the renewing frequency}

With $d=100$ and daily trading frequency, we now examine the performance of the entropy-weighted portfolio applying different renewing frequencies (renewed weekly, monthly, and quarterly, respectively). Figure~\ref{fg ETP_3} displays and Table~\ref{tab 2} summarises the wealth processes of the entropy-weighted and the corresponding index tracking portfolio under different renewing frequencies. Similar to the equally-weighted portfolio, the less frequently the constituent list is renewed, the better the entropy-weighted portfolio performs. When transaction costs are imposed, its performance depends more on the renewing frequency. However, compared with the equally-weighted portfolio summarised in Table~\ref{tab 11}, the performance of the entropy-weighted portfolio is less sensitive to transaction costs under the same renewing frequency.

\begin{figure}
\includegraphics[width=\textwidth]{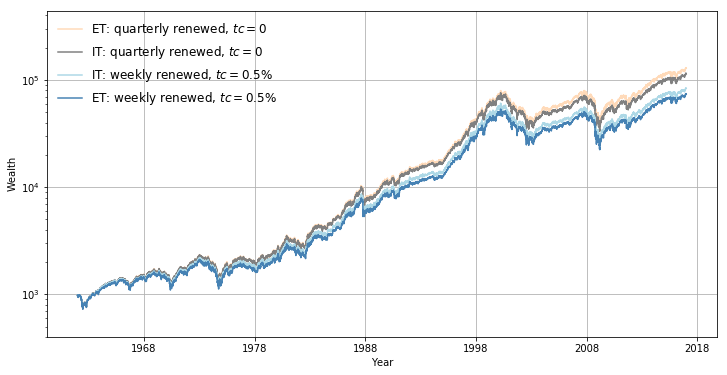}
\caption{The wealth processes of the entropy-weighted portfolio (ET) and the corresponding index tracking portfolio (IT) on logarithmic scale under different renewing frequencies and transaction cost rates $\tc$ with $d=100$ and daily trading frequency. For both the entropy-weighted and the index tracking portfolio, the omitted wealth processes of Table~\ref{tab 2} lie between the plotted ones.}
\label{fg ETP_3}
\end{figure}

\begin{table}
\begin{tabular}{ l | c c c c | c c c c }
 & IT$^{W}_{0}$ & ET$^{W}_{0}$ & IT$^{W}_{0.5}$ & ET$^{W}_{0.5}$ & IT$^{Q}_{0}$ & ET$^{Q}_{0}$ & IT$^{Q}_{0.5}$ & ET$^{Q}_{0.5}$ \\
 \hline
Yearly return & 10.14 & 10.31 & 9.73 & 9.50 & 10.34 & 10.58 & 10.20 & 10.11 \\
 \hline
Excess return & & 0.17 & & -0.23 & & 0.24 & & -0.09 \\
 \hline
Std & 16.89 & 16.93 & 16.84 & 16.84 & 16.83 & 16.86 & 16.81 & 16.81 \\
 \hline
Sharpe ratio & 0.29 & 0.30 & 0.27 & 0.26 & 0.31 & 0.32 & 0.30 & 0.29 \\
\hline
Wealth & 102.2 & 111.3 & 83.5 & 73.9 & 114.2 & 129.0 & 106.5 & 101.6 \\
\hline
TC & & & 4.2 & 7.4 & & & 2.0 & 5.8 \\
\hline
\end{tabular}
\caption{Yearly returns and excess returns in percentage, standard deviations of yearly returns (Std), Sharpe ratios, and the wealth and the cumulative transaction costs (TC) in thousands at the end of the investment period of the entropy-weighted portfolio (ET) and the corresponding index tracking portfolio (IT) under different renewing frequencies and transaction cost rates $\tc$ with $d=100$ and daily trading frequency. The subscript $x$ corresponds to $\tc=x\%$ and the superscripts $W$ and $Q$ indicate weekly and quarterly renewing frequencies, respectively.}
\label{tab 2}
\end{table}

\subsubsection*{Varying the market size $d$}

Applying daily trading and monthly renewing frequencies, we backtest the entropy-weighted portfolio under different constituent list sizes $d$ ($=100$, $300$, and $500$, respectively), as shown in Figure~\ref{fg ETP_4} and Table~\ref{tab 3}. Similar to the equally-weighted and the index tracking portfolio, the more stocks the constituent list contains, the better the entropy-weighted portfolio performs. Compared with the equally-weighted portfolio, the entropy-weighted portfolio with the same $d$ depends less on transaction costs. In particular, with $d=500$ and $\tc=0.5\%$, the entropy-weighted portfolio still outperforms the corresponding index tracking portfolio.

\begin{figure}
\includegraphics[width=\textwidth]{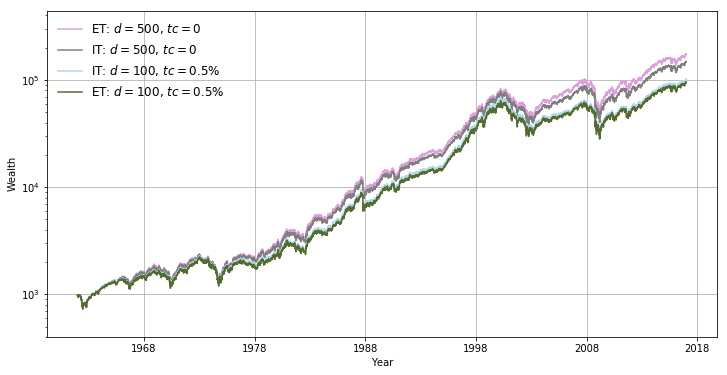}
\caption{The wealth processes of the entropy-weighted portfolio (ET) and the corresponding index tracking portfolio (IT) on logarithmic scale under different constituent list sizes $d$ and transaction cost rates $\tc$ with daily trading and monthly renewing frequencies. For both the entropy-weighted and the index tracking portfolio, the omitted wealth processes of Table~\ref{tab 3} lie between the plotted ones.}
\label{fg ETP_4}
\end{figure}

\begin{table}
\begin{tabular}{ l | c c c c | c c c c }
 & IT$^{300}_{0}$ & ET$^{300}_{0}$ & IT$^{300}_{0.5}$ & ET$^{300}_{0.5}$ & IT$^{500}_{0}$ & ET$^{500}_{0}$ & IT$^{500}_{0.5}$ & ET$^{500}_{0.5}$ \\
 \hline
Yearly return & 10.61 & 10.87 & 10.46 & 10.43 & 10.83 & 11.16 & 10.71 & 10.75 \\
 \hline
Excess return & & 0.26 & & -0.03 & & 0.33 & & 0.04 \\
 \hline
Std & 16.57 & 16.57 & 16.55 & 16.52 & 16.61 & 16.66 & 16.60 & 16.62 \\
 \hline
Sharpe ratio & 0.33 & 0.34 & 0.32 & 0.31 & 0.34 & 0.36 & 0.33 & 0.34 \\
\hline
Wealth & 132.5 & 152.5 & 123.1 & 121.2 & 147.2 & 173.2 & 139.0 & 141.0 \\
\hline
TC & & & 2.4 & 6.2 & & & 2.3 & 6.4 \\
\hline
\end{tabular}
\caption{Yearly returns and excess returns in percentage, standard deviations of yearly returns (Std), Sharpe ratios, and the wealth and the cumulative transaction costs (TC) in thousands at the end of the investment period of the entropy-weighted portfolio (ET) and the corresponding index tracking portfolio (IT) under different constituent list sizes $d$ and transaction cost rates $\tc$ with daily trading and monthly renewing frequencies. The subscript $x$ corresponds to $\tc=x\%$ and the superscripts $300$ and $500$ indicate $d=300$ and $d=500$, respectively.}
\label{tab 3}
\end{table}

\subsection{Diversity-weighted portfolio and smoothing transaction costs}\label{sec div}

One portfolio that draws much attention in Stochastic Portfolio Theory is the so-called diversity-weighted portfolio generated from the ``measure of diversity''
\[
G_{p}(x)=\left(\sum_{i=1}^{d}x_{i}^{p}\right)^{1/p},\quad x\in\left\{(y_{1},\cdots,y_{d})'\in[0,1]^{d};\sum_{i=1}^{d}y_{i}=1\right\},
\]
for some fixed $p\in(0,1)$. Without changing the relative ranking of the stocks, the function $G_{p}(\cdot)$ generates portfolio weights smaller (larger) than the corresponding market weights for stocks with large (small) market weights. This diversification property of $G_{p}$ is closely related to the implementation of relative arbitrage portfolios; see Section~7 in \cite{FK_survey} for details. Section~6.3 in \cite{MR1894767} provides a theoretical approximation of the diversity-weighted portfolio turnover. An empirical study of this portfolio using S\&P 500 market data can be found in \cite{fernholz1998diversity} and Chapter~7 of \cite{MR1894767}, as well as in Example~5 of \cite{ruf2018generalised}. 

In the following, we examine the performance of this portfolio and illustrate the tradeoff between trading with a higher frequency and paying transaction costs. To achieve this, we shall replace the market weights by a smoothed version, given by
\[
\overline{\mu}(\cdot)=\alpha\mu(\cdot)+(1-\alpha)\Lambda(\cdot)
\]
with $\alpha\in(0,1)$. Here, the moving average process $\Lambda(\cdot)=(\Lambda_{1}(\cdot),\cdots,\Lambda_{d}(\cdot))'$ is given by
\[
\begin{aligned}
\Lambda_{j}(\cdot)=
\begin{cases}
\frac{1}{\delta}\int_{0}^{\cdot}\mu_{j}(t)\mathrm{d}t+\frac{1}{\delta}\int_{\cdot-\delta}^{0}\mu_{j}(0)\mathrm{d}t\quad&\text{on~}[0,\delta)\vspace{2mm}\\
\frac{1}{\delta}\int_{\cdot-\delta}^{\cdot}\mu_{j}(t)\mathrm{d}t\quad&\text{on~}[\delta,\infty)
\end{cases},\quad j\in\{1,\cdots,d\},
\end{aligned}
\]
for a fixed constant $\delta>0$. This moving average process $\Lambda(\cdot)$ is also included in the portfolio generating function studied in \cite{schied2018model}. Then the target weights are given by
\[
\pi_{j}(\cdot)=\mu_{j}(\cdot)\left(\Xi_{j}(\cdot)-
\sum_{i=1}^{d}\mu_{i}(\cdot)\Xi_{i}(\cdot)+1\right),\quad j\in\{1,\cdots,d\},
\]
where
\[
\Xi_{j}(\cdot)=\frac{\alpha\left(\overline{\mu}_{j}(\cdot)\right)^{p-1}}{\sum_{i=1}^{d}\left(\overline{\mu}_{i}(\cdot)\right)^{p}},\quad j\in\{1,\cdots,d\}.
\]

To backtest the portfolio, we fix $d=100$, the renewing frequency to be quarterly, and the ``diversity degree'' $p=0.8$. Moreover, we compute the moving average process $\Lambda(\cdot)$ using a one-year window. To be more specific, with daily trading frequency, we set $\delta=250$; with weekly trading frequency, we set $\delta=52$. To compute $\Lambda(\cdot)$ under weekly trading frequency, we only use market weights $\mu$'s on the days when transactions are made.

\subsubsection*{Varying the convexity weight $\alpha$ and the trading frequency}

In Table~\ref{tab 5}, we summarise the wealth processes of the diversity-weighted and the corresponding index tracking portfolio under both daily and weekly trading frequencies and with three different choices for the convexity weight $\alpha$, when there are no transaction costs, i.e., when $\tc=0$, and when $\tc=0.5\%$ and $\tc=1\%$, respectively.

\begin{table}
\begin{tabular}{ l | c c c | c | c c c | c c c }
 & IT$^{w}_{0}$ & IT$^{w}_{0.5}$ & IT$^{w}_{1}$ & $\alpha$ & DW$^{d}_{0}$ & DW$^{d}_{0.5}$ & DW$^{d}_{1}$ & DW$^{w}_{0}$ & DW$^{w}_{0.5}$ & DW$^{w}_{1}$ \\
 \hline
  & & & & 0.2 & 10.36 & 10.20 & 10.03 & 10.36 & 10.20 & 10.05 \\
Yearly return & 10.34 & 10.20 & 10.06 & 0.6 & 10.43 & 10.18 & 9.93 & 10.42 & 10.23 & 10.03 \\
  & & & & 1 & 10.54 & 10.11 & 9.68 & 10.51 & 10.24 & 9.96 \\
   \hline
   & & & & 0.2 & 0.03 & 0 & -0.03 & 0.02 & 0.01 & -0.01 \\
Excess return & & & & 0.6 & 0.1 & -0.01 & -0.13 & 0.09 & 0.03 & -0.03 \\
  & & & & 1 & 0.2 & -0.09 & -0.38 & 0.18 & 0.04 & -0.09 \\
 \hline
   & & & & 0.2 & 16.84 & 16.81 & 16.79 & 16.85 & 16.83 & 16.80 \\
Std & 16.85 & 16.83 & 16.81 & 0.6 & 16.84 & 16.81 & 16.77 & 16.86 & 16.83 & 16.80 \\
  & & & & 1 & 16.84 & 16.79 & 16.74 & 16.87 & 16.83 & 16.79 \\
 \hline
   & & & & 0.2 & 0.31 & 0.30 & 0.29 & 0.31 & 0.30 & 0.29 \\
Sharpe ratio & 0.31 & 0.30 & 0.29 & 0.6 & 0.31 & 0.30 & 0.28 & 0.31 & 0.30 & 0.29 \\
  & & & & 1 & 0.32 & 0.29 & 0.27 & 0.32 & 0.30 & 0.28 \\
  \hline
  & & & & 0.2 & 115.6 & 106.2 & 97.6 & 115.1 & 106.5 & 98.5 \\
Wealth & 123.4 & 108.3 & 94.2 & 0.6 & 119.8 & 105.4 & 92.8 & 118.8 & 107.8 & 97.8 \\
  & & & & 1 & 126.2 & 101.6 & 81.7 & 124.3 & 108.3 & 94.2 \\
  \hline
  & & & & 0.2 & & 2.1 & 4.0 & & 2.0 & 3.7 \\
TC & & 3.5 & 6.3 & 0.6 & & 3.2 & 5.9 & & 2.5 & 4.6 \\
  & & & & 1 & & 5.3 & 9.0 & & 3.5 & 6.3 \\
  \hline
\end{tabular}
\caption{Yearly returns and excess returns (with respect to the index tracking portfolio (IT) summarised here and in Table~\ref{tab 8}) in percentage, standard deviations of yearly returns (Std), Sharpe ratios, and the wealth and the cumulative transaction costs (TC) in thousands at the end of the investment period of the diversity-weighted portfolio (DW) under different trading frequencies, convexity weights $\alpha$, and transaction cost rates $\tc$ with $d=100$ and quarterly renewing frequency. The subscript $x$ corresponds to $\tc=x\%$ and the superscripts $d$ and $w$ indicate daily and weekly trading frequencies, respectively.}
\label{tab 5}
\end{table}

We first consider the case when there are no transaction costs. Everything else equal, the daily traded diversity-weighted portfolio performs similarly to the weekly traded portfolio. Under either trading frequency, the smaller the convexity weight $\alpha$ is, the worse the portfolio performs. Generating the portfolio with a smaller $\alpha$ is somewhat alike to trading less frequently, as it assigns less weights on the volatile term $\mu(\cdot)$ and more weights on the stable term $\Lambda(\cdot)$ when constructing $\overline{\mu}(\cdot)$, and thus makes $\overline{\mu}(\cdot)$ less volatile.

Next, we consider the case with transaction costs. Under either daily or weekly trading frequency, a smaller convexity weight $\alpha$ tends to improve the portfolio performance when the transaction cost rate $\tc$ becomes larger. This can be useful, since decreasing $\alpha$ partially cancels out the effect of transaction costs. Moreover, when $\tc=1\%$, the daily traded portfolio with $\alpha=0.2$ performs similarly as the weekly traded portfolio with $\alpha=0.6$. This indicates that, instead of trading less frequently in order to avoid paying transaction costs, one can adjust the convexity weight $\alpha$ to reach a more favourable balance between trading frequently and paying transaction costs.

\subsubsection*{Dynamic convexity weight $\alpha$ to smooth transaction costs}

Instead of fixing $\alpha$ throughout the investment period, we could adjust $\alpha$ dynamically to speed up or slow down trading. For example, given a baseline portfolio with constant convexity weight $\alpha_{0}$, we would choose $\alpha<\alpha_{0}$ ($\alpha>\alpha_{0}$) to trade less (more) in the next period if transaction costs paid in the last period are more (less) than a certain level. In the remaining part of this example, we fix daily trading frequency and dynamically adjust $\alpha(\cdot)$. 

Let $M\geq4$ denote the total number of quarters in the investment period and let $t_{u}^{\mathrm{r}}$, for $u\in\{1,\cdots,M\}$, denote the trading days on which the constituent list is renewed. Moreover, set $t_{0}^{\mathrm{r}}=t_{0}$. On a specific renewing day $t_{u}^{\mathrm{r}}$, for $u\in\{1,\cdots,M\}$, let $\widetilde{\TC}(t_{u}^{\mathrm{r}})$ denote the averaged fictitious transaction costs relative to the wealth $V_{\alpha_{0}}(\cdot-)$ of the baseline portfolio paid in the previous period. More precisely, $\widetilde{\TC}(t_{u}^{\mathrm{r}})$ is computed as
\[
\widetilde{\TC}(t_{u}^{\mathrm{r}})=\frac{1}{\kappa_{u}}\sum_{t\in[t_{u-1}^{\mathrm{r}},t_{u}^{\mathrm{r}})}\min\left\{\frac{\TC_{\alpha_{0}}(t)}{V_{\alpha_{0}}(t-)},\xi\right\}.
\]
Here, $\kappa_{u}$ is the number of trading days within the period $[t_{u-1}^{\mathrm{r}},t_{u}^{\mathrm{r}})$, $\TC_{\alpha_{0}}(\cdot)$ is computed by \eqref{eq TC} from the baseline portfolio, and $\xi$ is a predetermined level used to make the estimate more robust. On a trading day $t$, We regard $\TC_{\alpha_{0}}(t)/V_{\alpha_{0}}(t-)>\xi$ as ``abnormal'' transaction costs relative to $V_{\alpha_{0}}(t-)$. Such large costs appear, for example, when the constituent list is changing. The level $\xi$ is determined such that the days, on which ``abnormal'' transaction costs occur, only count for a small proportion of all trading days.

\begin{figure}
\includegraphics[width=\textwidth]{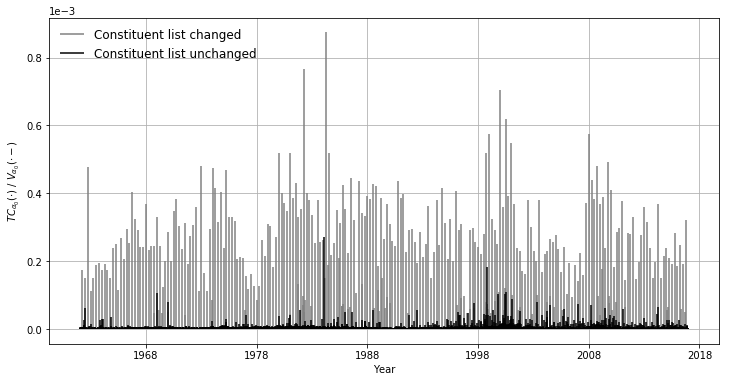}
\caption{Transaction costs $\TC_{\alpha_{0}}(\cdot)$ of the baseline portfolio relative to its wealth $V_{\alpha_{0}}(\cdot-)$, i.e., $\TC_{\alpha_{0}}(\cdot)/V_{\alpha_{0}}(\cdot-)$, paid when the constituent list is changed and unchanged, respectively, with $\alpha_{0}=0.6$, when $\tc=0.5\%$.}
\label{fg DIV_3}
\end{figure}

Figure~\ref{fg DIV_3} shows the relative transaction costs $\TC_{\alpha_{0}}(\cdot)/V_{\alpha_{0}}(\cdot-)$ when the constituent list is changed and unchanged, respectively, with $\alpha_{0}=0.6$ and $\tc=0.5\%$. Transaction costs paid when the constituent list is changed are significantly larger than when the constituent list remains the same. The days when the constituent list is changed only account for less than $5\%$ of all trading days, i.e., $M/N<0.05$, where $N$ is again the total number of trading days. 

We shall smooth the relative transaction costs $\TC(\cdot)/V_{\alpha_{0}}(\cdot-)$ by dynamically adjusting $\alpha(\cdot)$. Starting with $\alpha(t_{0})=\alpha_{0}$, the convexity weight $\alpha(\cdot)$ is piecewise constant and only updated on the renewal dates $t_{u}^{\mathrm{r}}$, for $u\in\{4,\cdots,M\}$. This reduces additional transaction costs incurred from updating $\alpha(\cdot)$. In particular, for all $u\in\{4,\cdots,M\}$, we set
\[
\alpha(t_{u}^{\mathrm{r}})=\max\left\{\min\left\{\alpha_{0}\left(1-\beta\times\overline{\TC}(t_{u}^{\mathrm{r}})\right),1\right\},0\right\}
\]
with
\[
\overline{\TC}(t_{u}^{\mathrm{r}})=\frac{\widetilde{\TC}(t_{u}^{\mathrm{r}})}{\frac{1}{4}\sum_{\nu=u-3}^{u}\widetilde{\TC}(t_{\nu}^{\mathrm{r}})}-1.
\]
Here, $\beta\geq0$ is a fixed non-negative constant that controls the sensitivity of $\alpha(\cdot)$. Hence, we compare the fictitious averaged transaction costs relative to $V_{\alpha_{0}}(\cdot-)$ within the most recent quarter to that of the past one year. The value $\overline{\TC}(\cdot)$ is positive (negative) if the baseline portfolio requires more (less) transaction costs in the most recent quarter than the last year. This will yield  $\alpha(\cdot)<\alpha_{0}$ ($\alpha(\cdot)>\alpha_{0}$) and slow down (speed up) the trading within the next quarter.

Using a baseline portfolio with constant convexity weight $\alpha_{0}=0.6$ and assuming $\tc=0.5\%$, we now estimate the effects of a dynamic convexity weight $\alpha(\cdot)$  empirically. Moreover, we set the relative transaction cost level $\xi=10^{-5}$, as the fictitious relative transaction costs $\TC_{\alpha_{0}}(\cdot)/V_{\alpha_{0}}(\cdot-)$ are less than this level on more than 95\% of all trading days. We examine the three cases $\beta\in\{0,0.05,0.1\}$. Note that $\beta=0$ yields $\alpha(\cdot)=\alpha_{0}$. With these choices of $\beta$, the portfolio with dynamic $\alpha(\cdot)$ performs similarly to the baseline portfolio; see column $V^{d}_{\tc\_0.5}$ in Table~\ref{tab 5} with $\alpha=0.6$.

The convexity weight process $\alpha(\cdot)$ corresponding to the sensitivity parameter $\beta$ is shown in Figure~\ref{fg DIV_5}. As expected, $\alpha(\cdot)$ fluctuates more rapidly with a larger $\beta$. As mentioned before, increasing $\alpha$ speeds up trading and leads to more transaction costs, while decreasing $\alpha$ has the opposite effect. Choosing $\beta$ very large results in a portfolio far away from the baseline portfolio. This dependence on $\beta$ is illustrated in Figure~\ref{fg DIV_7}, which plots the total quadratic variation $\mathrm{QV}$ of relative transaction costs $\TC(\cdot)/V_{\alpha_{0}}(\cdot-)$, computed as
\[
\mathrm{QV}=\sum_{l=1}^{N}\left(\frac{\TC(t_{l})}{V_{\alpha_{0}}(t_{l}-)}-\frac{\TC(t_{l-1})}{V_{\alpha_{0}}(t_{l-1}-)}\right)^{2},
\]
for different sensitivity parameters $\beta$. The total quadratic variation $\mathrm{QV}$ is a measure of volatility. Figure~\ref{fg DIV_7} suggests that choosing $\beta\approx0.05$ minimises $\mathrm{QV}$.

\begin{figure}
\includegraphics[width=\textwidth]{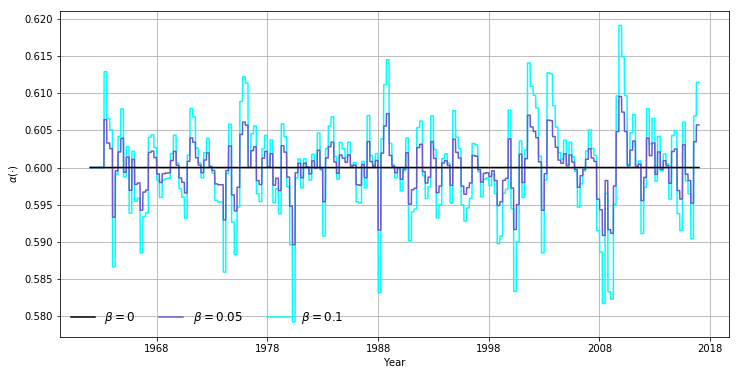}
\caption{Convexity weight process $\alpha(\cdot)$ for different sensitivity parameters $\beta$ with $\alpha_{0}=0.6$, when $\tc=0.5\%$.}
\label{fg DIV_5}
\end{figure}

\begin{figure}
\includegraphics[width=\textwidth]{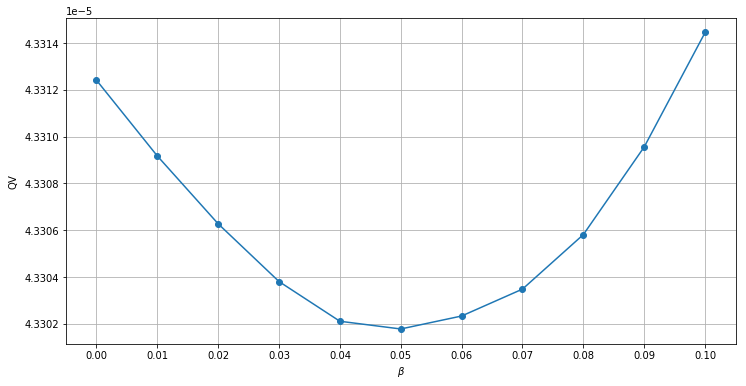}
\caption{Quadratic variation $\mathrm{QV}$ of relative transaction costs $\TC(\cdot)/V_{\alpha_{0}}(\cdot-)$ for different sensitivity parameters $\beta$ with $\alpha_{0}=0.6$, when $\tc=0.5\%$.}
\label{fg DIV_7}
\end{figure}

\section{Conclusion}\label{sec C}

In this paper, we empirically study the impact of proportional transaction costs on systemically generated portfolios. Given a target portfolio, we provide a scheme to backtest the portfolio using total market capitalization and daily stock return time series. Implementing this scheme, we examine the performance of several portfolios (the index tracking portfolio, the equally-weighted portfolio, the entropy-weighted portfolio, and the diversity-weighted portfolio), assuming various transaction cost rates, trading frequencies, portfolio constituent list sizes, and renewing frequencies.

As expected, everything else equal, a portfolio performs worse as transaction costs are higher and the portfolio renewing frequency of the underlying constituent list is higher. In the absence of transaction costs, trading under a higher frequency leads to better portfolio performance. However, in the presence of transaction costs, implementing a higher trading frequency can also result in larger transaction costs and reduce the portfolio performance significantly. Hence, trading under an appropriate frequency is necessary in practice. In addition, with or without transaction costs, a more diversified portfolio containing more stocks usually performs better.

The empirical results indicate that the equally-weighted portfolio performs well relative to the index tracking portfolio when there are no transaction costs. However, the performance of the equally-weighted portfolio is very sensitive to transaction costs. Although the entropy-weighted portfolio performs a bit worse than the equally-weighted portfolio (but still outperforms the index tracking portfolio) when there are no transaction costs, its performance depends much less on transaction costs, compared to the equally-weighted portfolio.

Last but not the least, we propose a method to smooth transaction costs. Without changing the trading frequency, this method is similar to altering the trading speed dynamically.

\bibliography{TC}
\bibliographystyle{chicago}
\end{document}